\newenvironment{proof}[1][Proof]{\textbf{#1.} }{\ \rule{0.5em}{0.5em}}  
\newtheorem{Proposition}{Proposition}
\newcommand{\snr}{\mathrm{SNR}}
\newcommand{\data}{\mathrm{DATA}}
\newcommand{\hd}{\mathrm{HD}}
\newcommand{\cab}{\mathrm{CAB}}
\newcommand{\zf}{\mathrm{ZF}}
\newcommand{\ini}{\mathrm{INI}}
\newcommand{\tr}{\mathrm{tr}}
\begin{document} 
\title{Multiuser MIMO Beamforming with Full-duplex Open-loop Training}

\author{Xu Du, John Tadrous, Chris Dick, and Ashutosh Sabharwal 
\thanks{Xu Du, John Tadrous, Ashutosh Sabharwal are with the Department of Electrical and Computer Engineering, Rice University, Houston, TX, 77005. E-mails: \{xu.du, jtadrous, ashu\}@rice.edu. This work was partially supported by a grant from Xilinx Incorporated, and NSF Grant CNS-1314822}
\thanks{Chris Dick is with Xilinx, Inc., San Jose, CA, 95124 USA. E-mail: chris.dick@xilinx.com}
\thanks{The material in this paper was presented (without proof) in 16th IEEE International Workshop on Signal Processing Advances in Wireless Communications, 2015.}
}

\bibliographystyle{IEEEtran}

\maketitle 
\begin{abstract}
In this paper, full-duplex radios are used to continuously update the channel state information at the transmitter, which is required to compute the downlink precoding matrix in MIMO broadcast channels. The full-duplex operation allows leveraging channel reciprocity for open-loop uplink training to estimate the downlink channels. However, the uplink transmission of training creates interference at the downlink receiving mobile nodes. We characterize the optimal training resource allocation and its associated spectral efficiency, in the proposed open-loop training based full-duplex system. We also evaluate the performance of the half-duplex counterpart to derive the relative gains of full-duplex training. Despite the existence of the inter-node interference due to full-duplex, significant spectral efficiency improvement is attained over half-duplex operation. 
\end{abstract}
 
\IEEEpeerreviewmaketitle
\newtheorem{Definition}{Definition}
\newtheorem{theorem}{Theorem}
\newtheorem{Lemma}{Lemma}
\section{Introduction}
In multi-user MIMO, up to $M$ users can be simultaneously supported at full multiplexing gain by an $M$-antenna base station, if perfect channel knowledge is available at such base station. Accurate channel state information (CSI) is essential to achieve maximal multiplexing gain performance. As a result, larger number of antennas demand more CSI, that in turn implies longer feedback duration~\cite{caire2010multiuser}. In a time-division duplex uplink/downlink transmission, longer feedback duration implies reduced time for sending actual data, creating a tradeoff between available amount of CSI and resulting spectral efficiency. In this paper, we propose the use of full-duplex transmission capabilities for simultaneous feedback and data transmission while utilizing the channel reciprocity offered by the same-band operation of full-duplex.


The use of full-duplex radios for multiuser MIMO to reduce the time cost of digital feedback  was first proposed in \cite{du2014mimo}. Even though the full-duplex feedback introduces inter-node interference (INI), it was shown that by refining precoding matrix continuously during digital feedback, full-duplex radios provide a multiplexing gain over their equivalent half-duplex counterparts. In this paper, we extend the continuously adaptive beamforming (CAB) strategy, proposed in~\cite{du2014mimo}, to a multiuser MIMO downlink that exploits channel reciprocity by adopting open-loop training. Open-loop training stands for a system whereby the base station learns CSI by estimating training pilots sent over the uplink channel and then uses the channel estimates for downlink transmissions. Our main contributions are as follows.
 \begin{enumerate} 
\item We extend the CAB strategy of \cite{du2014mimo} to an open-loop training where users exploit the uplink/downlink reciprocity, offered by full duplex, to train the base station of the downlink channel. Unlike quantize and feedback scheme used in \cite{du2014mimo}, open-loop training provides a simpler and faster means to channel state feedback.  We show that open-loop training with CAB has a considerable potential to boost downlink data rates.
\item We study the optimal training duration together with associated spectral efficiency gains of the CAB strategy. We analytically characterize tight approximations of the optimal feedback resources and establish a tight upper bound for the spectral efficiency loss when compared to the genie aided full CSI scenario.
\item We derive a lower bound for the spectral efficiency gains reaped from full-duplex feedback when compared to the half-duplex counterpart. These gains are significant in low-to-moderate signal-to-noise ratio (SNR) regimes, and grow linearly with the number of training symbols.
\end{enumerate} 

In \cite{caire2010multiuser}, different feedback training strategies for half-duplex MIMO broadcast channels have been characterized, though the focus was not on optimizing feedback resources. In~\cite{kobayashi2011training}, the authors considered the optimization problem of time resource allocation for systems with different forms of feedback. The major difference between \cite{kobayashi2011training} and this paper is that we consider full-duplex systems instead of half-duplex systems. In addition, we do not require that users send feedback with equal power to that of the base station. Instead, we analyze our system under a general choice of feedback power which is allowed to be smaller than the base station's.

The rest of this paper is organized as follows. In Section~\ref{sec:model}, we introduce the setup of a full-duplex multiuser MIMO downlink system. In Section~\ref{sec:Result}, we first extend the CAB strategy to a system with open-loop training where channel reciprocity holds. Then the performance of CAB is evaluated both analytically and via simulations. The paper is concluded in Section~\ref{sec:conclude}.

\section{System Model} \label{sec:model}
Consider a full-duplex multiuser MIMO downlink system consisting of an $M$-antenna base station, streaming downlink data to $M$ single-antenna users. The base station relies on the downlink channel knowledge to construct the precoding matrix. Albeit sub-optimal, we consider the zero-forcing (ZF) precoding strategy \cite{spencer2004zero} for its simplicity. Imperfect CSI yields inaccurate ZF beamforming, which then translates to inter-beam interference (IBI).

Since full-duplex enables simultaneous transmission on both uplink and downlink over the same band, two new types of interference may result. The first one is self-interference, which is the interference caused by the transmitter to its own receiver. We assume that self-interference can be reduced to near-noise floor by a combination of active cancellation \cite{Duarte2010FD, Jain2011practicalFD} and passive suppression \cite{everett2014passive}. The other is inter-node interference (INI), caused by uplink training pilots sent concurrently during downlink transmissions.  While sending more feedback is advantageous to reduce IBI, it also induces INI during training. In this paper, we focus on the tradeoff between IBI and INI, and assume perfect self-interference cancellation at all nodes\footnote{We further discuss performance of systems consisted of half-duplex nodes, where no user self-interference cancellation is needed, in~\cite{FDMIMO2015trans}.} in the system. Fig.~\ref{fig:interference} gives a high level schematic of different types of interference.

\begin{figure}[htbp]
\centering
\includegraphics[width=8cm, ,height=4cm]{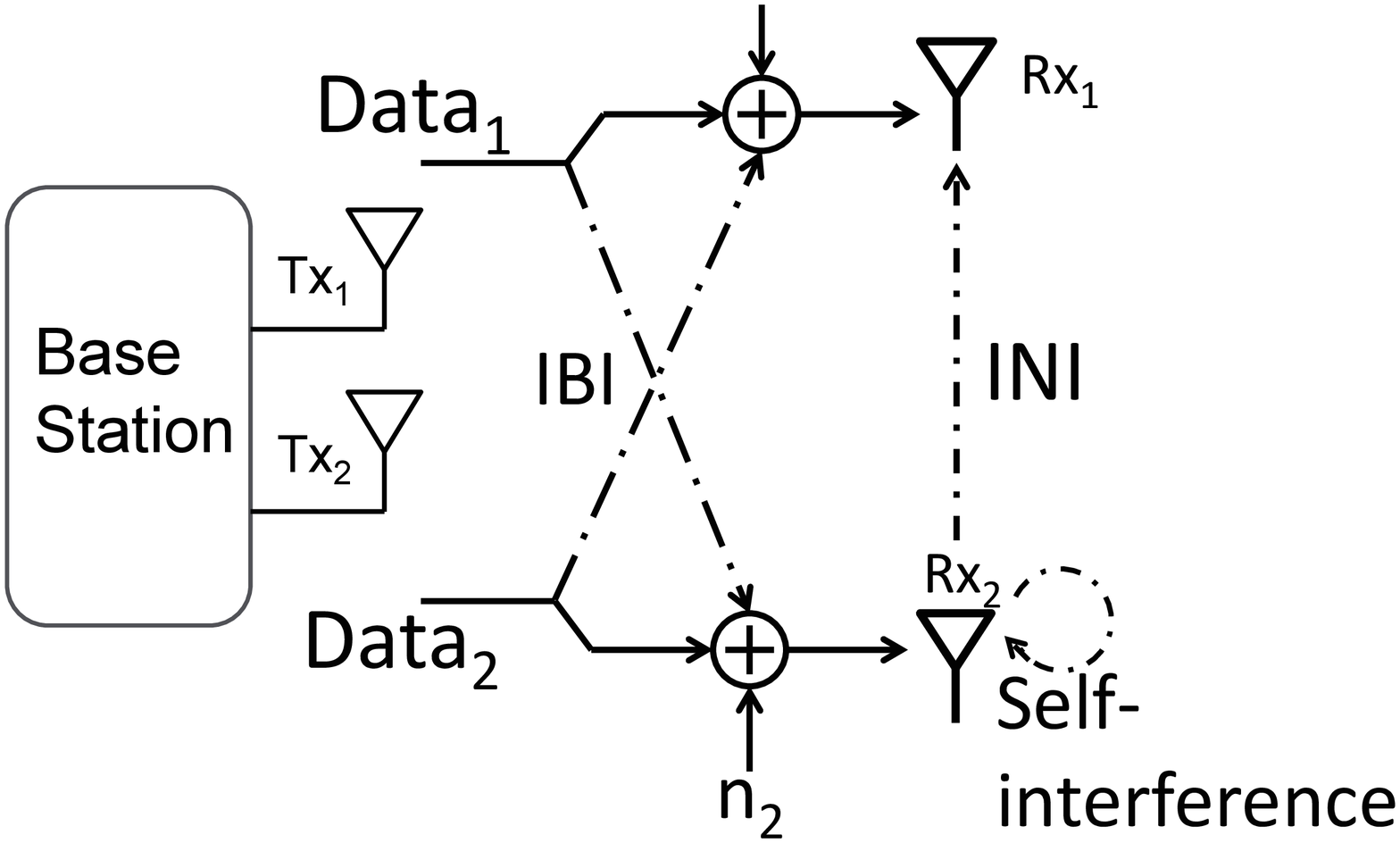} 
\caption{A schematic of different types of interference in a $2 \times 2$ full-duplex MIMO broadcast system. } 
\label{fig:interference}
\end{figure}

We assume users send training symbols sequentially in orthogonal time slots. Then the signal received by user $i$, when user $k$ is transmitting uplink training symbols, is given by
\begin{equation}
y_{i}=\left\{\begin{matrix}
\textbf{h}_{i}^{*}\textbf{x}+h_{ik}x_{\mathrm{UP}_{k}}+n_{i},\ & k\not=i \\ 
\textbf{h}_{i}^{*}\textbf{x}+n_{i},\ &k=i
\end{matrix}\right. ,\quad i=1,\cdots,M,
\end{equation}
where $\textbf{h}_{i}\in \mathbb{C}^{1\times M}$ and $\textbf{x}$ denotes the channel realization of user $i$ and the precoded signal matrix, respectively. Thus the term $\textbf{h}_{i}^{*}\textbf{x}$ includes both the intended signal and the IBI due to imperfect channel knowledge. The signal  $x_{\mathrm{UP}_{k}}$ is the uplink training sent by the $k$-th user. The channel between the $k$-th user and $i$-th user is denoted by $h_{ik}$. Finally, the signal is degraded by an independent unit variance additive white complex Gaussian noise samples $n_{i}$. The fading environment is assumed to be Rayleigh block fading, i.e., each element of $\textbf{h}_{i}$ is independently complex Gaussian distributed from block to block. The total block length is $T$ symbols.

The transmit signal from the base station is subject to an average power constraint ($\mathbb{E}[|\textbf{x}|^2]\leqslant P$). In this paper, we consider equal power allocation among antennas and symbols. Due to the limitations of both battery and size of user devices, a more strict average power constraint is considered at the users, which is mathematically captured as $\mathbb{E}[|x_{\mathrm{UP}_{k}}|^2]\leqslant fP,\ f\in (0,1]$. 
The INI incurred by uplink training pilots is assumed proportional to the training power, and grows as $\alpha f P$, where $\alpha>0$. Finally, we assume that each user has perfect knowledge of its own channel realization. 

\section{Main Results} \label{sec:Result}
In this section, we extend the CAB strategy in \cite{du2014mimo} to systems with open-loop training, and evaluate their performance. Since there is INI generated by full-duplex operation, the effect of it is first quantified in Section~\ref{sec:INI}. In Sections~\ref{sec:OptLen} and~\ref{sec:SpEf}, we focus on the time resource allocation and spectral efficiency of the CAB strategy. The result for the half-duplex counterpart is also presented and compared with CAB.

\subsection{Continuously Adaptive Beamforming} \label{sec:CAB}

We first describe Continuous Adaptive Beamforming (CAB) strategy for open-loop training system. The key idea behind the CAB is to send downlink data concurrently with uplink training pilots collection, instead of waiting for all the uplink training pilots to be collected and then starting downlink transmission. For a transmission block with $T_{\cab}$ uplink training pilots, CAB with open-loop training operates as follows.

\begin{enumerate}

\item At the beginning of each block,  no CSI is available at the base station. Each of users $1$, ..., $M$  sends a training symbol sequentially in a TDMA manner from symbol $1$ to symbol $M$. We refer to such $M$ symbols where each user sends one training pilot as a \textsl{cycle}. The first $M$ symbols constitute cycle $j=1$.

\item For cycle $j+1$, the base station updates its ZF precoding matrix and transmits downlink data based on all the uplink training pilots received during the whole previous $j$ cycles, i.e., $j$ uplink training pilots from each user\footnote{It is also possible to update precoding matrix at the end of each symbol. Albeit more promising than updating at the end of each cycle, the spectral efficiency performance is not significantly different. We will discuss such system in our extended work.}. All users decode the received signal by treating interference as noise. An illustration is shown in Fig.~\ref{fig:CAB}.

\item Repeat 2) till the end of $T_{\cab}$ symbols. 

\item When all $T_{\cab}$ uplink training pilots are collected, all users stop sending training pilots to the base station and transmission continue in the downlink direction only.

\end{enumerate}

\begin{figure}[htbp]
\centering
\includegraphics[width=8cm,height=3cm]{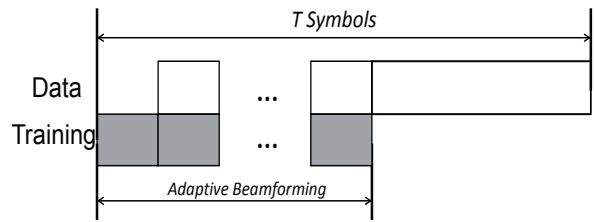}
\caption{A $T_{\cab}$-symbol ($T_{\cab}/M$-cycle) CAB is illustrated. At the end of cycles, the base station refines its precoding vectors based on the accumulated pilots.}
\label{fig:CAB}
\end{figure} 
In each cycle, there is one symbol where the user itself sends uplink training pilot to the base station. We note the rate performance in this symbol as $R_{\cancel{\ini}}$, since no INI is incurred from other users. However, since no user cooperation is applied, in the rest $M-1$ symbols, the user suffers INI induced by training symbols sent by other users. The associated rate in those $M-1$ symbols is referred as $R_{\ini}$. The spectral efficiency (achievable rate) attained by adopting the CAB strategy can be then characterized as\footnote{Traditionally, the ergodic rate is shown to be achievable by coding over blocks with a simple Gaussian long codebook\cite{ErodAchGauLon1999Shamai}. Similarly, CAB also relies on coding downlink symbols across different blocks.}:
\begin{multline} 
AR_{\cab}=\frac{T-T_{\cab}}{T}R_{\data}+\sum_{j=2}^{T_{\cab}/M}R_{\cancel{INI}}((j-1)M)+\left(M-1\right)R_{INI}((j-1)M).
\label{equ:ARCAB}
\end{multline} 
In~\eqref{equ:ARCAB}, the argument $(j-1)M$ captures the number of training symbols received so far by the base station. Also, $R_{\data}$ is the downlink data transmission rate achieved after all $T_{\cab}$ uplink training pilots have been collected.  Assuming MMSE estimation at the base station, $R_{\data}$ is obtained below by following the recipe in \cite{caire2010multiuser}.
\begin{align}
R_{\data}\left(T_{\cab}\right)=&R^{\zf}-\Delta R_{\data}\left(T_{\cab}\right) \notag \\
\geqslant& R^{\zf}- \log\left(  1+\frac{P}{M}\frac{M-1}{1+T_{\cab} f P/M}\right).  \label{equ:dRDATA}
\end{align}
The term $R^{\zf}$ stands for the rate when a genie provides perfect CSI to the base station. The term $\Delta R_{\data}\left(T_{\cab}\right)$ is the rate loss due to imperfect channel knowledge. Due to the limit of block length and user power, finite training duration and feedback power lead to a positive rate loss.

The spectral efficiency of the half-duplex counterpart is given by
\begin{equation} 
AR_{\hd}=\frac{T-T_{\hd}}{T}R_{\data}.
\label{equ:ARHD}
 \end{equation} 
Here $T_{\hd}$ is the duration of training in half-duplex systems. The term $\frac{T-T_{\hd}}{T}$ is the fraction of time  devoted to downlink data transmission out of each coherence time. Hence half-duplex downlink data rate \emph{after} the collection of all training pilots satisfies~\eqref{equ:dRDATA} with $T_{\cab}$ replaced by $T_{\hd}$. 

\subsection{Impact of INI} \label{sec:INI}
Full-duplex radios enable the base station to send downlink data during training. During the symbols when user $i$ sends uplink training pilots, the rate $R_{\cancel{INI}}$ can be readily captured by~\eqref{equ:dRDATA}; note that we assume perfect self-interference cancellation. However, the received signal suffers interference from the uplink training pilots transmitted from other users. Since only imperfect CSI is available, the rate $R_{\ini}$ is the result of both INI and IBI. 
In this paper, we consider INI to be a linear fraction of the total transmit power of each user. In particular, we capture INI as $\alpha f P$, $\alpha>0$, where factor $\alpha$ represents the strength of incurred interference. Next, we provide a theorem that captures the impact of INI in the full-duplex case.

\begin{theorem} \label{thm:INI}
If base station has $\beta$ (integer) uplink training pilots of power $fP$ from each user, the rate per user with INI, $R_{\ini}(\beta M)=R^{\zf}-\Delta R_{\ini} \left(\beta M\right)$, satisfies:
\begin{equation}
R_{\ini} (\beta M) \geqslant R^{\zf}-  \log\left( \frac{1+\frac{P}{M}\frac{M-1}{1+\beta f P} +\alpha f P}{1 +\frac{\alpha f P}{1+\frac{P}{M}}}\right)  \label{equ:dRBT},
\end{equation}
where $\Delta R_{\ini}(\beta M)$ represents the rate loss due to both IBI and INI during full-duplex open-loop training with $\beta M$ training symbols received by the base station.
\end{theorem}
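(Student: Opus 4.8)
The plan is to reuse the MMSE-estimation / worst-case-Gaussian-noise argument that already underlies \eqref{equ:dRDATA} (the ``recipe'' of \cite{caire2010multiuser}), fold the INI into the effective noise, and then bound the resulting rate loss in a way that keeps enough of the numerator structure to recover the improvement factor $\bigl(1+\alpha fP/(1+P/M)\bigr)^{-1}$, rather than the cruder $\log\!\bigl(1+\tfrac{P}{M}\tfrac{M-1}{1+\beta fP}+\alpha fP\bigr)$.

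\emph{Reduction to an effective channel.} Consider a symbol of cycle $j+1$ in which some user $k\neq i$ transmits its pilot, so that $\beta=j$ pilots of power $fP$ have been collected from every user. Write $\mathbf{h}_i=\hat{\mathbf{h}}_i+\mathbf{e}_i$, with $\hat{\mathbf{h}}_i$ the MMSE estimate from which the ZF precoder is built. The received signal then splits into a desired term of per-user power $S$ (with $\mathbb{E}[S]=P/M$), an inter-beam interference term that is nonzero only because $\mathbf{e}_i\neq 0$, the INI term $h_{ik}x_{\mathrm{UP}_k}$ of power at most $\alpha fP$, and unit-variance noise. Treating the estimation-error contribution, the INI, and the noise jointly as worst-case Gaussian noise of the matched aggregate variance lower-bounds the per-user rate by $\mathbb{E}[\log(1+\gamma_{\ini})]$, where $\gamma_{\ini}=S/(1+I+\alpha fP)$ and $I$ is the IBI power. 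Since the $\beta$ pilots are time-orthogonal, each of power $fP$ against unit-variance noise, the per-coefficient MMSE error variance is $1/(1+\beta fP)$, and $\mathbf{e}_i$ is independent of $\hat{\mathbf{h}}_i$ and hence of the unit-norm ZF columns; thus $\mathbb{E}[I]\le \frac{P}{M}(M-1)\frac{1}{1+\beta fP}$, exactly the IBI term appearing in \eqref{equ:dRDATA}.

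\emph{Bounding the loss.} With $v=\alpha fP$ and $\bar I=\mathbb{E}[I]$, the reduction above gives
$$\Delta R_{\ini}(\beta M)=R^{\zf}-R_{\ini}(\beta M)\le \mathbb{E}\!\left[\log\frac{(1+S)(1+I+v)}{1+S+I+v}\right].$$
The key step is to drop $I\ge 0$ from the \emph{denominator only}, keeping the factor $1+S$ in the numerator, and then split the logarithm:
$$\Delta R_{\ini}(\beta M)\le \mathbb{E}\bigl[\log(1+I+v)\bigr]-\mathbb{E}\!\left[\log\Bigl(1+\tfrac{v}{1+S}\Bigr)\right].$$
Since $I\mapsto\log(1+I+v)$ is concave, Jensen gives $\mathbb{E}[\log(1+I+v)]\le\log(1+\bar I+v)$; since $S\mapsto\log(1+\tfrac{v}{1+S})$ is convex, Jensen gives $\mathbb{E}[\log(1+\tfrac{v}{1+S})]\ge\log(1+\tfrac{v}{1+\mathbb{E}[S]})=\log(1+\tfrac{v}{1+P/M})$. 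Combining, and using $\bar I\le\frac{P}{M}\frac{M-1}{1+\beta fP}$,
$$\Delta R_{\ini}(\beta M)\le\log(1+\bar I+v)-\log\Bigl(1+\tfrac{v}{1+P/M}\Bigr)\le\log\frac{1+\frac{P}{M}\frac{M-1}{1+\beta fP}+\alpha fP}{1+\frac{\alpha fP}{1+P/M}},$$
which is the claimed bound \eqref{equ:dRBT}.

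\emph{Main obstacle.} The delicate point is the asymmetric treatment of $I$ in the last step: a one-shot bound of the type $\log(1+x)-\log(1+\tfrac{x}{1+y})\le\log(1+y)$ would wipe out the denominator $1+\tfrac{\alpha fP}{1+P/M}$, so the two separate Jensen steps — concavity in $I$ yielding $\bar I$, and convexity of $\log(1+\tfrac{v}{1+S})$ in $S$ yielding $\mathbb{E}[S]=P/M$ — are what produce the improvement. The secondary point is making the reduction rigorous: justifying that, once the INI is absorbed into the effective noise, the \cite{caire2010multiuser}-style worst-case-Gaussian-noise lower bound applies verbatim, and fixing the normalization so that the post-ZF desired power averages to $P/M$, the same $P/M$ that enters $R^{\zf}$ and the correction term.
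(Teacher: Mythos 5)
Your proposal is correct and follows essentially the same route as the paper's own proof: the same reduction (via the worst-case-noise/MMSE recipe of \cite{caire2010multiuser}) to $\mathbb{E}\bigl[\log\frac{(1+S)(1+I+v)}{1+S+I+v}\bigr]$, the same asymmetric dropping of the IBI term from the denominator only, and the same two Jensen steps with $\mathbb{E}[I]\leqslant\frac{P}{M}\frac{M-1}{1+\beta fP}$ and $\mathbb{E}[S]=P/M$. Your explicit remark that the second Jensen step rests on the convexity of $S\mapsto\log\bigl(1+\tfrac{v}{1+S}\bigr)$ is a slightly more careful statement of what the paper compresses into ``Jensen's inequality while noting the concavity of the $\log$ function,'' but it is the identical argument.
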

\begin{proof}
By evaluating the signal received at each time, the rate can be then captured with the expectation of interference power by using Jensen's inequality. The interference of IBI follows from~\cite{caire2010multiuser} and INI power is immediate from our assumption. Detailed proof can be found in Appendix~\ref{app:INI}.
\end{proof}

 The impact of IBI and INI appears in terms $\frac{P}{M}\frac{M-1}{1+\beta f P}$ and $\alpha f P$ respectively. The first term decreases with $\beta$, which means that more uplink training pilots enhance CSI estimation accuracy and diminish IBI. The $\alpha f P$ term on the numerator captures the INI impact. While it increases with $\alpha$, another $\alpha$ factor in the denominator poses a finite bound on the rate-loss due to increasing INI. For example, as $\alpha \to \infty$, the rate loss term is upper bounded by $\log\left(1+P/M\right)$, which is obviously finite.

An interesting observation is that as $\beta \to \infty$, the effect of IBI vanishes, and the upper bound on the rate loss becomes 
\begin{equation}
\label{eq:tilde_delta_R}
\widetilde{\Delta R}_{\ini}=\log\left( \frac{1+\alpha f P}{1 +\frac{\alpha f P}{1+\frac{P}{M}}}\right).
\end{equation}
$\widetilde{\Delta R}_{\ini}$ can be viewed as the ``invariant'' rate loss caused by INI during training. This quantity will be relevant in our investigations of the optimal tradeoff between IBI and INI as follows. 

\subsection{Optimal Training Resource Allocation} \label{sec:OptLen}
Using the above rate characterization, we next answer the question: how many symbols $T_{\cab}$, $T_{\hd}$ should be used for training to maximize the spectral efficiency, i.e.,
\begin{equation}
\max_{0 \leq T_{s}< T} AR_{s},\quad s\in\{\cab,\hd\} .
\end{equation}
Shorter training results in larger IBI, while longer training suggests strong influence of INI. Thus, solving for the optimal number of training symbols optimizes the tradeoff between IBI and INI. We first focus on the time allocation for the CAB strategy, and then an approximate solution for half-duplex scenario  is developed to compare with full-duplex.
\subsubsection{CAB Strategy} \label{sec:optLenCAB}
We now consider the question of time resource allocation for a CAB strategy described in section~\ref{sec:CAB}.
\begin{theorem}\label{thm:LenCAB}\label{sec:optlen}
The optimal training duration $T_{\cab}^*$ that maximizes spectral efficiency of a CAB system is approximated as
\begin{equation}
\widetilde{T}_{\cab}^{*}=\frac{\sqrt{4cT+1}-1}{2c}\approx \sqrt{\frac{MT}{f \widetilde{\Delta R}_{\ini} }},
\end{equation}
where $c=\frac{f\widetilde{\Delta R}_{\ini}}{M}.$
\end{theorem}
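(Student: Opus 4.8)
The plan is to replace every rate in \eqref{equ:ARCAB} by ``genie rate minus loss'' using Theorem~\ref{thm:INI} and \eqref{equ:dRDATA}, reduce the maximization of $AR_{\cab}$ to the minimization of an aggregate loss, strip that loss down to the two terms that dominate near the optimum, and solve the resulting quadratic. First substitute $R_{\data}(T_{\cab})=R^{\zf}-\Delta R_{\data}(T_{\cab})$, $R_{\cancel{\ini}}((j-1)M)=R^{\zf}-\Delta R_{\data}((j-1)M)$, and $R_{\ini}((j-1)M)=R^{\zf}-\Delta R_{\ini}((j-1)M)$ into \eqref{equ:ARCAB}. Every symbol outside the first (all-uplink) cycle carries a downlink stream, so the $R^{\zf}$ contributions add up to the constant $\tfrac{T-M}{T}R^{\zf}$, and maximizing $AR_{\cab}$ is equivalent to minimizing
\[
L(T_{\cab})=\frac{T-T_{\cab}}{T}\,\Delta R_{\data}(T_{\cab})
+\frac{1}{T}\sum_{j=2}^{T_{\cab}/M}\Bigl[\Delta R_{\data}\bigl((j-1)M\bigr)+(M-1)\,\Delta R_{\ini}\bigl((j-1)M\bigr)\Bigr],
\]
where throughout I use the upper bounds \eqref{equ:dRDATA} and \eqref{equ:dRBT} for the $\Delta R$'s, so that $L$ is the quantity whose minimizer we report as $\widetilde T_{\cab}^{*}$.

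Second, I would isolate the invariant INI loss $\widetilde{\Delta R}_{\ini}$ of \eqref{eq:tilde_delta_R} from everything that decays with the pilot count. Bounding $\log(1+y)\le y$ in \eqref{equ:dRDATA} and \eqref{equ:dRBT} gives $\Delta R_{\data}(\beta M)\le\tfrac{M-1}{Mf\beta}$ and $\Delta R_{\ini}(\beta M)-\widetilde{\Delta R}_{\ini}\le\tfrac{M-1}{Mf(1+\alpha fP)\beta}$, i.e. both remainders are $O(1/\beta)$ with explicit constants. Hence the during-training sum equals $\tfrac{(M-1)(T_{\cab}/M-1)}{T}\widetilde{\Delta R}_{\ini}$ plus a residual $\tfrac1T\sum_{j=2}^{T_{\cab}/M}O(1/(j-1))=O\!\bigl(\tfrac{\log T_{\cab}}{T}\bigr)$. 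Near the candidate optimum $T_{\cab}\asymp\sqrt T$ this residual, and its derivative in $T_{\cab}$, are of strictly smaller order than the two terms I keep --- the post-training IBI term $\tfrac{T-T_{\cab}}{T}\Delta R_{\data}(T_{\cab})$ and the invariant INI term $\tfrac{(M-1)T_{\cab}}{MT}\widetilde{\Delta R}_{\ini}$ --- so discarding it perturbs the minimizer only to lower order. This reduces the problem to minimizing $\widetilde L(T_{\cab})=\tfrac{T-T_{\cab}}{T}\Delta R_{\data}(T_{\cab})+\tfrac{(M-1)T_{\cab}}{MT}\widetilde{\Delta R}_{\ini}$.

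Third comes the calculus. Since at the optimum $T_{\cab}fP\gg M$, I use $1+T_{\cab}fP/M\approx T_{\cab}fP/M$ and $\log(1+y)\approx y$ to get $\Delta R_{\data}(T_{\cab})\approx\tfrac{M-1}{fT_{\cab}}$, hence $\Delta R_{\data}'(T_{\cab})\approx-\tfrac{M-1}{fT_{\cab}^{2}}$. The stationarity condition $T\,\widetilde L'(T_{\cab})=0$ reads $-\Delta R_{\data}(T_{\cab})+(T-T_{\cab})\Delta R_{\data}'(T_{\cab})+\tfrac{M-1}{M}\widetilde{\Delta R}_{\ini}=0$; here $\Delta R_{\data}(T_{\cab})=O(1/\sqrt T)$ is negligible against the remaining two $O(1)$ terms, so I keep $(T-T_{\cab})\tfrac{M-1}{fT_{\cab}^{2}}=\tfrac{M-1}{M}\widetilde{\Delta R}_{\ini}$, i.e. $T-T_{\cab}=c\,T_{\cab}^{2}$ with $c=\tfrac{f\widetilde{\Delta R}_{\ini}}{M}$. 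The positive root of $c\,T_{\cab}^{2}+T_{\cab}-T=0$ is $\widetilde T_{\cab}^{*}=\tfrac{\sqrt{4cT+1}-1}{2c}$, and for large $T$ the approximation $\sqrt{4cT+1}\approx2\sqrt{cT}$ yields $\widetilde T_{\cab}^{*}\approx\sqrt{T/c}=\sqrt{MT/(f\widetilde{\Delta R}_{\ini})}$. Convexity of $\widetilde L$ (its first term is a positive multiple of $T/T_{\cab}-1$, the second is linear in $T_{\cab}$) confirms this critical point is a minimizer; and since $T_{\cab}$ must be a positive multiple of $M$, one rounds $\widetilde T_{\cab}^{*}$ to the nearest such value, which shifts $AR_{\cab}$ only at lower order.

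The main obstacle is the error bookkeeping in the second step: rigorously certifying that dropping the IBI contributions accumulated during training --- together with the $\Delta R_{\data}(T_{\cab})$ self-term in the first-order condition --- moves the true maximizer of the Theorem~\ref{thm:INI}/\eqref{equ:dRDATA} lower bound by only a lower-order amount, so that the clean quadratic genuinely ``approximates'' the optimal training duration in the sense claimed. Everything after that is elementary manipulation of logarithms and a quadratic.
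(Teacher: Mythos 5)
Your proposal is correct and follows essentially the same route as the paper: both balance the marginal IBI-reduction benefit $(T-T_{\cab})\,\bigl|\Delta R_{\data}'(T_{\cab})\bigr|$ against the marginal INI cost $\tfrac{M-1}{M}\widetilde{\Delta R}_{\ini}$ and solve the same quadratic $c\,T_{\cab}^{2}+T_{\cab}-T=0$, using the same approximations ($\Delta R_{\ini}\approx\widetilde{\Delta R}_{\ini}$ near the optimum and $\Delta R_{\data}(T_{\cab})\approx\tfrac{M-1}{fT_{\cab}}$). The only mechanical difference is that you differentiate a simplified surrogate loss with explicit residual bookkeeping, whereas the paper imposes the cycle-level finite-difference condition $AR_{\cab}(T_{\cab}^{*})\approx AR_{\cab}(T_{\cab}^{*}+M)$ and Maclaurin-expands the resulting rate-loss differences.
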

 \begin{proof}  
The benefit (additional spectral efficiency) from an additional training cycle is monotonically decreasing, due to shorter time for data transmission after training and the convexity of $\log\left(1+1/x\right)$. Moreover, since INI term $\alpha f P$, is independent of $T_{\cab}$, the spectral efficiency $AR_{\cab}$ is concave in $T_{\cab}$.  Therefore, there will be a unique $T_{\cab}^*$ that maximizes the spectral efficiency, at which 
\begin{equation}
\frac{\partial AR_{\cab}\left(T_{\cab}^*\right)}{\partial T_{\cab}}=0.
\end{equation}
Since direct application of the differential operation to the summation in $AR_{\cab}$ is challenging, the following approximation is made for mathematical tractability
\begin{equation}\label{equ:optLenCABcom}
AR_{\cab}\left(T_{\cab}^{*}\right)\approx AR_{\cab}\left(T_{\cab}^{*}+M\right).
\end{equation}
 
Expanding both sides of~\eqref{equ:optLenCABcom} leads to and applying rate loss characterization given by~\eqref{equ:dRDATA} and~\eqref{equ:dRBT}, we have
\begin{align}
&\frac{T-\left(\widetilde{T}_{\cab}^{*}+M\right)}{T}\left[\Delta R_{\data}\left(\widetilde{T}_{\cab}^{*}\right)-\Delta R_{\data}\left(\widetilde{T}_{\cab}^{*}+M\right)\right] \notag \\ =& \frac{M-1}{T}\left[\Delta R_{\ini}\left(\widetilde{T}_{\cab}^{*}\right)-\Delta R_{\data}\left(\widetilde{T}_{\cab}^{*}\right)\right]. \label{equ:optCABcom} 
\end{align}
The rate loss difference on the left hand side ($\text{LHS}$) of~\eqref{equ:optCABcom} can be simplified by using Maclaurin expansion as
\begin{align*}
\text{LHS}= 
\approx&  \frac{(M-1)P}{M } \frac{fP}{\left(1+f\widetilde{T}_{\cab}^{*} P/ M\right)^2}.
\end{align*}

The rate loss difference on the right hand side ($\text{RHS}$) of~\eqref{equ:optCABcom} can also be computed as follows,
\begin{align*}
\text{RHS}= \log\left( \frac{1+\frac{\alpha f P}{1+\frac{P}{ M}\frac{M-1}{1+f  \widetilde{T}_{\cab}^{*} P/ M }}}{1 +\frac{\alpha f P}{1+\frac{P}{M}}}\right) \approx  \widetilde{\Delta R}_{\ini}.
\end{align*}
This means that the RHS is mainly contributed by the INI, not IBI.

Substituting the expressions of LHS, and RHS into~\eqref{equ:optCABcom} leads to the theorem.
\end{proof}

 In the proof above, we observe that the left hand side of~\eqref{equ:optCABcom} is the benefit of reducing IBI on spectral efficiency by adding one training cycle. This can be interpreted as \textsl{marginal utility}. On the right hand side is the spectral efficiency loss due to INI in this new cycle, which can be labeled as \textsl{marginal cost}. The optimal point occurs at the spot where marginal cost equals marginal utility, which suggests the rate benefit obtained by reducing IBI and rate loss due to more INI break even.  

Other interesting observations are as follows. i) Even for large $T$, the optimal training duration scales in the order of $\sqrt{T}$. Such scaling has been observed for both half-duplex MIMO broadcast channels with analog feedback\cite{kobayashi2011training} and point-to-point MIMO\cite{hassibi2003much}. It suggests an identical scaling law is shared by half-duplex and full-duplex MIMO systems. A numerical simulation result provided in Fig.~\ref{fig:OptLenT} confirms this analysis. ii) The INI has a crucial impact on the optimal training duration. A large $\alpha$ (high INI) implies high rate loss, which leads to shorter feedback duration. In contrast, small $\alpha$ (low INI) yields longer training duration. When $\alpha=0$ (users are hidden from each other), users should always send the training pilots. iii) The feedback power also plays a role in the optimal training length. Small $f$ leads to low INI power (rate loss), which contributes to a longer training duration to enhance the IBI performance. iv) As $\snr$ grows, the optimal duration of training decreases due to high-quality estimation of the channel. In Fig.~\ref{fig:OptLenSNR}, a decreasing trend of $T_{\cab}^{*}$ with respect to the increase of $\snr$ is observed.

\begin{figure}[htbp]
\centering
\includegraphics[width=8cm, height=3.5cm]{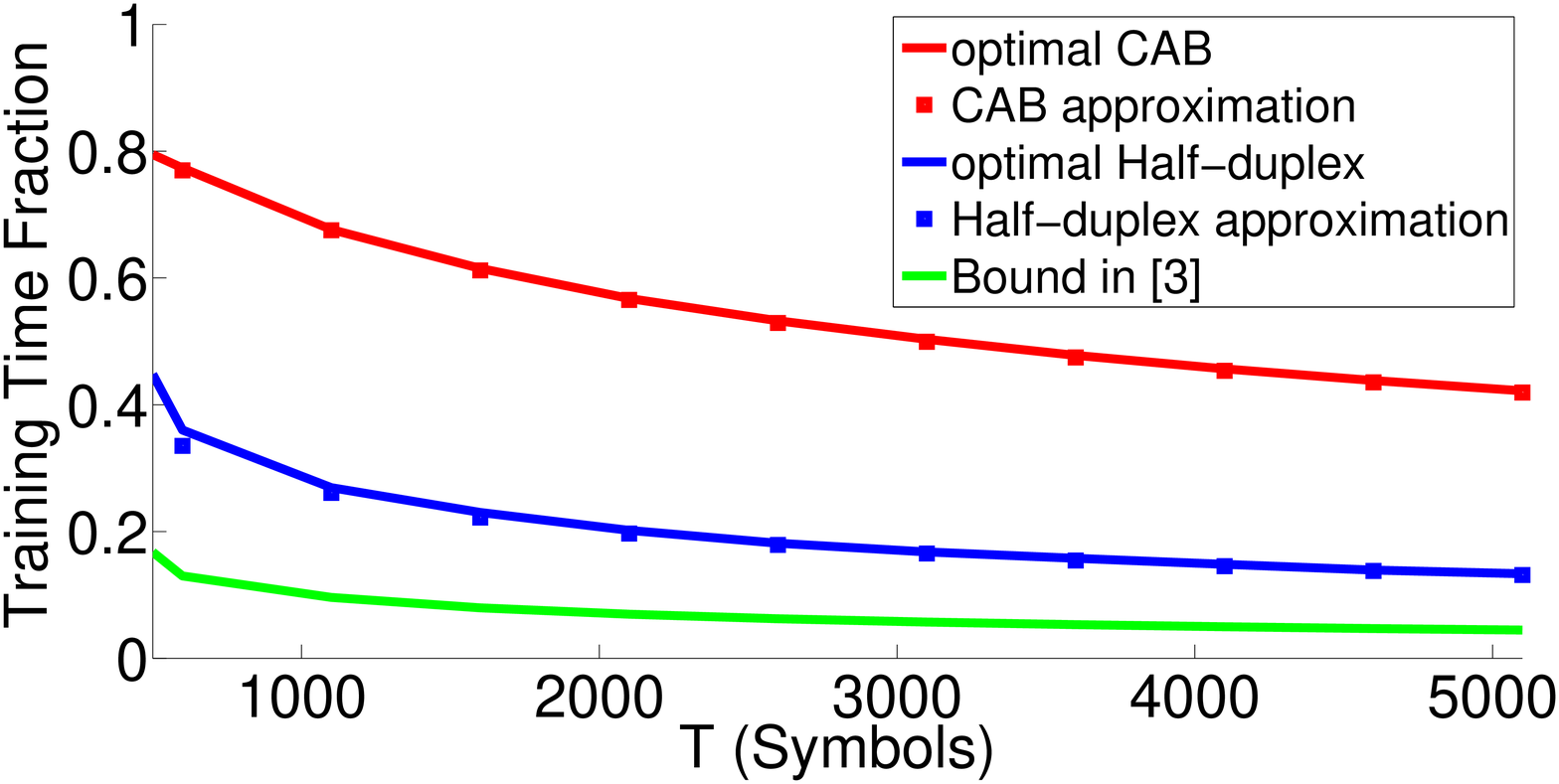}
\caption{Optimal fraction of time spent in training vs. block length, where $\snr=10\text{dB}$, $M=8$, $f=\alpha=0.1$.} \label{fig:OptLenT}
\end{figure}

\subsubsection{Half-duplex System}
We now study the optimal feedback duration for the half-duplex system, where a tighter approximation than that of \cite{kobayashi2011training} is presented.
\begin{theorem}
The optimal training duration $T_{\hd}^*$ that maximizes the spectral efficiency of the open-loop training half-duplex system is tightly approximated as
 \begin{equation}
\widetilde{T}_{\hd}^{*}=\frac{\sqrt{4cT+1}-1}{2c}\approx \sqrt{\frac{\left(M-1\right)T}{f R^{\zf}}},
\end{equation}
where $c=\frac{f R^{\zf}}{M-1}.$
\end{theorem}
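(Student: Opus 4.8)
The plan is to follow the same route as the proof of Theorem~\ref{thm:LenCAB}, specialised to the half-duplex objective~\eqref{equ:ARHD}. I first establish that $AR_{\hd}(T_{\hd})$ has a unique maximiser over the integers. Writing the per-symbol marginal gain
\[
AR_{\hd}(T_{\hd}+1)-AR_{\hd}(T_{\hd})=\frac{T-T_{\hd}-1}{T}\bigl[R_{\data}(T_{\hd}+1)-R_{\data}(T_{\hd})\bigr]-\frac{1}{T}R_{\data}(T_{\hd}),
\]
each piece is monotone in $T_{\hd}$: the prefactor $\frac{T-T_{\hd}-1}{T}$ decreases, the increment $R_{\data}(T_{\hd}+1)-R_{\data}(T_{\hd})$ is positive and decreasing (the diminishing-returns property coming from convexity of $\log(1+1/x)$ in the loss bound~\eqref{equ:dRDATA}, exactly as in Theorem~\ref{thm:LenCAB}), and $R_{\data}(T_{\hd})$ increases. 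Hence the marginal gain is monotonically decreasing, crosses zero once, and a unique optimal $T_{\hd}^{*}$ exists.

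Since differencing the objective exactly is unwieldy, I replace the optimality condition by the break-even approximation $AR_{\hd}(T_{\hd}^{*})\approx AR_{\hd}(T_{\hd}^{*}+M)$, the half-duplex counterpart of~\eqref{equ:optLenCABcom} (the step length is immaterial and cancels to leading order). Expanding both sides with~\eqref{equ:ARHD} and~\eqref{equ:dRDATA} and cancelling gives
\[
\frac{T-T_{\hd}^{*}}{T}\bigl[\Delta R_{\data}(T_{\hd}^{*})-\Delta R_{\data}(T_{\hd}^{*}+M)\bigr]\approx\frac{M}{T}\bigl[R^{\zf}-\Delta R_{\data}(T_{\hd}^{*}+M)\bigr].
\]
The left-hand side is the \emph{marginal utility} of reducing IBI by training longer; the right-hand side is the \emph{marginal cost} of surrendering data symbols. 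The crucial difference from Theorem~\ref{thm:LenCAB} is that, with no INI, the cost here is simply the forgone data rate, which at the (long) optimal training length is close to $R^{\zf}$.

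Next I Maclaurin-expand. Using $\log(1+x)\approx x$ in~\eqref{equ:dRDATA}, the bracket on the left simplifies to leading order to $\frac{(M-1)P}{M}\cdot\frac{fP}{(1+T_{\hd}^{*}fP/M)^{2}}$, while on the right $\Delta R_{\data}(T_{\hd}^{*}+M)$ is negligible so that $R^{\zf}-\Delta R_{\data}(T_{\hd}^{*}+M)\approx R^{\zf}$. Substituting, taking $T_{\hd}^{*}$ large enough that $1+T_{\hd}^{*}fP/M\approx T_{\hd}^{*}fP/M$, the factors of $M$ and $P$ collapse and leave $(T-T_{\hd}^{*})\frac{M-1}{f(T_{\hd}^{*})^{2}}\approx R^{\zf}$, i.e.\ the quadratic $c\,(T_{\hd}^{*})^{2}+T_{\hd}^{*}-T\approx 0$ with $c=\frac{fR^{\zf}}{M-1}$. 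Taking the positive root yields $\widetilde{T}_{\hd}^{*}=\frac{\sqrt{4cT+1}-1}{2c}$; dropping the sub-leading linear term (valid for large $T$) gives $\widetilde{T}_{\hd}^{*}\approx\sqrt{T/c}=\sqrt{\frac{(M-1)T}{fR^{\zf}}}$, as claimed.

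The main obstacle is not the algebra but arguing that this is a \emph{tight} approximation — the claim that distinguishes it from~\cite{kobayashi2011training}. The delicate points are: (i) that at $T_{\hd}^{*}$ the IBI loss $\Delta R_{\data}$ is genuinely small compared with $R^{\zf}$, so that replacing it by $0$ on the cost side and keeping only the $1/(T_{\hd}^{*})^{2}$ behaviour of the telescoped difference on the utility side is justified; (ii) that the finite-difference substitution for the derivative and the particular step $M$ do not perturb the leading two orders — which is precisely why one retains the linear term $T_{\hd}^{*}$ and solves the full quadratic rather than only reporting the $\sqrt{T}$ leading term; and (iii) checking numerically (as in Fig.~\ref{fig:OptLenT}) that the residual error stays small over the SNR and block-length ranges of interest.
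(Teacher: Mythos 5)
Your argument is correct and follows essentially the same route as the paper's own proof: the break-even condition $AR_{\hd}(T_{\hd}^{*})\approx AR_{\hd}(T_{\hd}^{*}+M)$, the same marginal-utility/marginal-cost expansion with the foregone data rate $R_{\data}(T_{\hd}^{*}+M)$ ``naively'' replaced by $R^{\zf}$, the same Maclaurin expansion of the IBI-loss difference, and the same quadratic $c(T_{\hd}^{*})^{2}+T_{\hd}^{*}-T\approx 0$ with $c=\frac{fR^{\zf}}{M-1}$. Your added remarks on uniqueness via the decreasing marginal gain and on tightness are consistent with (and slightly more explicit than) the paper's concavity assertion, but do not change the approach.
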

\begin{proof}
The proof follows a similar approach to that of Theorem~\ref{thm:LenCAB}. The spectral efficiency benefit and loss of adding one more symbol to each cycle is first captured. The optimal training duration of half-duplex also achieves when the benefit and loss breaks even. Detailed proof can be found in Appendix~\ref{app:LenHD}.
\end{proof}

Similar to CAB system, an identical scaling trend of $T_{\hd}^*$ with respect to $T$, $\snr$ and $f$ is observed. Numerical simulation  results shown in Figs.~\ref{fig:OptLenT} and \ref{fig:OptLenSNR} confirm this observation. Detailed parameters can be found in the respective captions. It is also worthwhile noting that since $R^{\zf} \geq \widetilde{\Delta R}_{\ini}$, this suggests a longer feedback duration for CAB system. 
\begin{figure}[htbp]
\centering
\includegraphics[width=8cm, height=3.5cm]{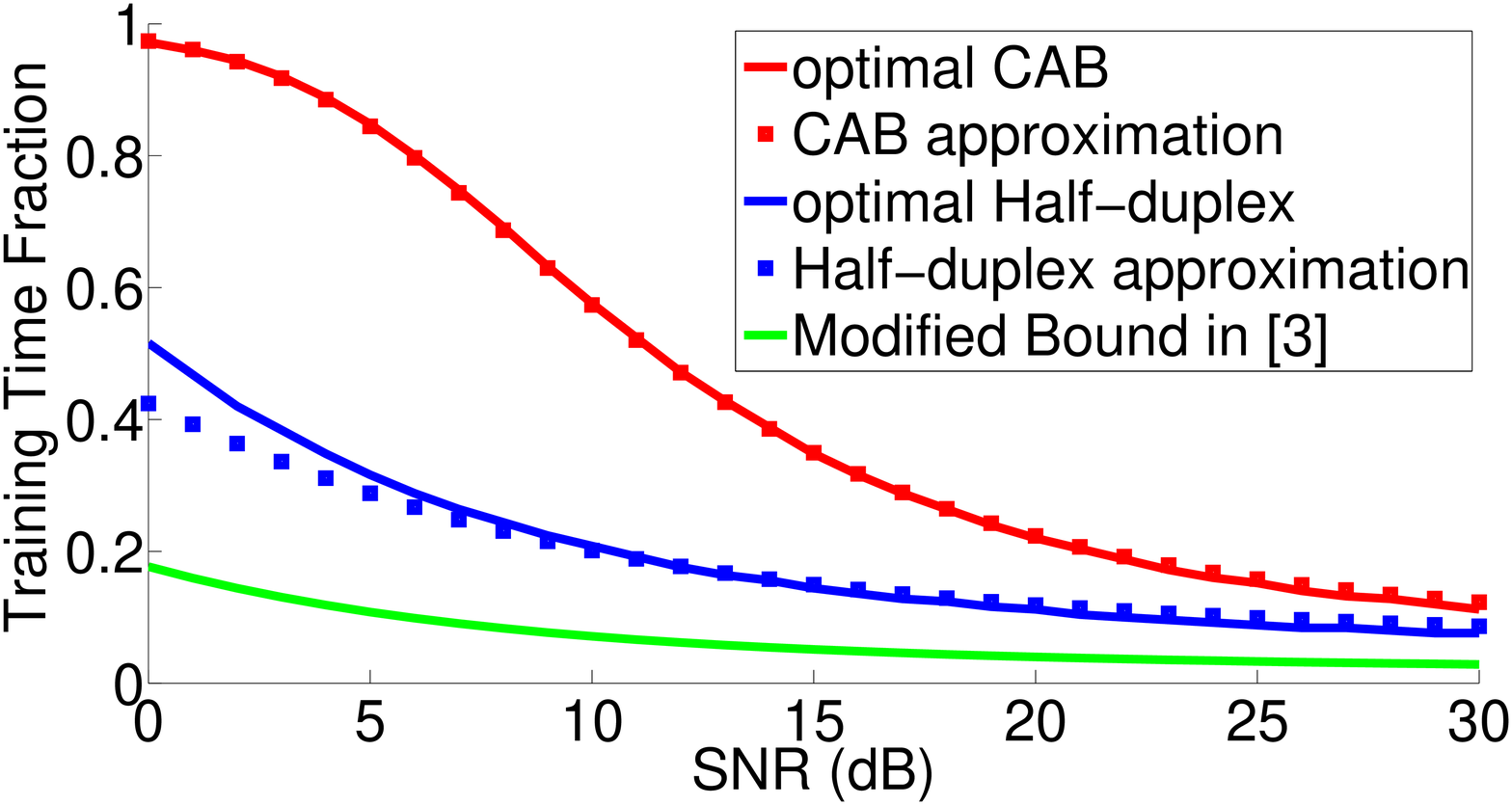}
\caption{Optimal fraction of time spent in training vs SNR. For T, where $T=2000$, $M=8$, $f=\alpha=0.1$.} \label{fig:OptLenSNR}
\end{figure}
 \subsection{Spectral Efficiency} \label{sec:SpEf}

The optimal training length analysis in Section~\ref{sec:OptLen} motivates an analytical study of spectral efficiency and potential gains reaped by CAB with open-loop training. 

\subsubsection{Open-loop CAB}
The maximal spectral efficiency of CAB strategy is achieved when training duration is optimized, i.e. $T_{\cab}^*$ symbols are used for training. We define this as \textsl{optimal CAB} strategy. It is straightforward to observe that the optimal  spectral efficiency is lower bounded by a CAB strategy with $\widetilde{T}_{\cab}^*$ symbols, i.e., $AR_{\cab}\left(T_{\cab}^*\right) \geqslant AR_{\cab}\left(\widetilde{T}_{\cab}^*\right).$ That leads to the following theorem.
\begin{theorem} \label{thm:CABspe}
The spectral efficiency loss of CAB system with respect to genie-aided full-CSI system is upper-bounded as
\begin{align}
\Delta  AR_{\cab}\left(T_{\cab}^*\right)\leqslant& R^{\zf}-AR_{\cab}\left(\widetilde{T}_{\cab}^*\right) \notag\\
\leqslant& 2\sqrt{\frac{M\widetilde{\Delta R}_{\ini}}{fT}}+o(\frac{1}{\sqrt{T}}). \label{equ:dARCAB}
\end{align} 
\end{theorem}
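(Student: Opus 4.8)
The plan is to establish the second inequality in~\eqref{equ:dARCAB} by expanding $R^{\zf}-AR_{\cab}(\widetilde T_{\cab}^{*})$ symbol by symbol and showing that a single $\Theta(1/\sqrt{T})$ term survives while everything else is $o(1/\sqrt{T})$; the first inequality is immediate, since $T_{\cab}^{*}$ is the maximizer from Theorem~\ref{thm:LenCAB} and the feasible but generally suboptimal choice $\widetilde T_{\cab}^{*}$ can only decrease the spectral efficiency. The fact I would lean on throughout is that $R^{\zf}$ and $\widetilde{\Delta R}_{\ini}$ do not depend on $T$, so the large-$T$ behaviour is dictated entirely by how the $\Theta(\sqrt{T})$ training symbols are charged.

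First I would write $R^{\zf}=\tfrac{T}{T}R^{\zf}$ and subtract~\eqref{equ:ARCAB} symbol by symbol, splitting the per-symbol rate losses into four families: (i) the $M$ symbols of the first cycle, where no CSI is available, contributing $\tfrac{M}{T}R^{\zf}$; (ii) the $T-\widetilde T_{\cab}^{*}$ post-training symbols, each losing $\Delta R_{\data}(\widetilde T_{\cab}^{*})$; (iii) the single own-training symbol of each cycle $j=2,\dots,\widetilde T_{\cab}^{*}/M$, losing $\Delta R_{\data}((j-1)M)$; and (iv) the $M-1$ INI symbols of each such cycle, each losing $\Delta R_{\ini}((j-1)M)$. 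The two leading contributions will come from (ii) and (iv).

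Next I would bound these families. For (iv) I would rewrite~\eqref{equ:dRBT} as $\Delta R_{\ini}(\beta M)\le\widetilde{\Delta R}_{\ini}+\log(1+\tfrac{P}{M}\tfrac{M-1}{1+\beta fP})\le\widetilde{\Delta R}_{\ini}+\log(1+\tfrac{M-1}{M\beta f})$, isolating the invariant INI loss~\eqref{eq:tilde_delta_R} from a decaying IBI correction; summing the $\widetilde{\Delta R}_{\ini}$ part over the $(M-1)(\widetilde T_{\cab}^{*}/M-1)$ INI symbols, dividing by $T$, and inserting $\widetilde T_{\cab}^{*}$ from Theorem~\ref{thm:LenCAB} (equivalently $\widetilde T_{\cab}^{*}\approx\sqrt{MT/(f\widetilde{\Delta R}_{\ini})}$) gives a term at most $\tfrac{M-1}{M}\sqrt{M\widetilde{\Delta R}_{\ini}/(fT)}+o(1/\sqrt T)$. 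For (ii) I would use $\log(1+x)\le x$ together with $\tfrac{P}{M}\tfrac{M-1}{1+\widetilde T_{\cab}^{*}fP/M}\le\tfrac{M-1}{\widetilde T_{\cab}^{*}f}$ in~\eqref{equ:dRDATA}, and the same substitution for $\widetilde T_{\cab}^{*}$, to obtain $\Delta R_{\data}(\widetilde T_{\cab}^{*})\le\tfrac{M-1}{M}\sqrt{M\widetilde{\Delta R}_{\ini}/(fT)}+o(1/\sqrt T)$ (the prefactor $\tfrac{T-\widetilde T_{\cab}^{*}}{T}\le 1$ only helps), so (ii)$+$(iv)$\le 2\sqrt{M\widetilde{\Delta R}_{\ini}/(fT)}+o(1/\sqrt T)$. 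Everything else is $o(1/\sqrt{T})$: family (i) is $O(1/T)$; the IBI-correction sum in (iv) and all of family (iii) are bounded using the elementary estimate $\sum_{\ell=1}^{n}\log(1+a/\ell)=\log\tfrac{\Gamma(n+1+a)}{\Gamma(n+1)\,\Gamma(1+a)}=O(\log n)$ with $n=\widetilde T_{\cab}^{*}/M-1=\Theta(\sqrt{T})$, hence $O(\log T/T)$; the discrepancy between the exact $\widetilde T_{\cab}^{*}=\tfrac{\sqrt{4cT+1}-1}{2c}$ and $\sqrt{MT/(f\widetilde{\Delta R}_{\ini})}$ is $O(1)$, contributing $O(1/T)$ after multiplication by the $\Theta(1/T)$ per-cycle loss; and rounding $\widetilde T_{\cab}^{*}$ up to a multiple of $M$ costs at most one cycle, i.e.\ $O(1/T)$. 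Collecting the leading $2\sqrt{M\widetilde{\Delta R}_{\ini}/(fT)}$ with the $o(1/\sqrt{T})$ remainder yields~\eqref{equ:dARCAB}.

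The main obstacle is precisely this accounting: each training symbol must be charged only $\widetilde{\Delta R}_{\ini}+o(1)$ (INI symbols) or $o(1)$ (own-training symbols), not the full $R^{\zf}$. A crude ``zero rate during training'' bound would give $\tfrac{\widetilde T_{\cab}^{*}}{T}R^{\zf}=\tfrac{R^{\zf}}{\widetilde{\Delta R}_{\ini}}\sqrt{M\widetilde{\Delta R}_{\ini}/(fT)}$, which is strictly weaker than the claimed bound because $R^{\zf}\ge\widetilde{\Delta R}_{\ini}$; getting the constant down to exactly $2$ relies both on the refined per-symbol charging and on the $\Gamma$-function estimate that relegates the IBI-during-training terms to lower order.
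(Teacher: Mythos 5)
Your proposal is correct and follows essentially the same route as the paper's Appendix~C: evaluate the loss at $\widetilde{T}_{\cab}^{*}$, split each training symbol's loss into the invariant INI part $\widetilde{\Delta R}_{\ini}$ plus a decaying IBI correction, show the corrections and the first cycle(s) are $o(1/\sqrt{T})$, and substitute $\widetilde{T}_{\cab}^{*}\approx\sqrt{MT/(f\widetilde{\Delta R}_{\ini})}$ so that the training-phase INI term and the post-training $\Delta R_{\data}(\widetilde{T}_{\cab}^{*})$ term each contribute one $\sqrt{M\widetilde{\Delta R}_{\ini}/(fT)}$. The only differences are cosmetic bookkeeping: you bound the IBI-correction sum by a direct $O(\log n)$ estimate where the paper uses Jensen's inequality, and you track the own-training symbols separately where the paper simply charges all $M$ symbols of a cycle at the INI loss.
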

\begin{proof}
Evaluating the spectral efficiency of CAB with approximated optimal training duration and with futher mathematical manipulation leads to the theorem. Details of the mathematical steps can be found in Appendix~\ref{app:speCAB}.
\end{proof}

Here $o(\frac{1}{\sqrt{T}})$ suggests the spectral influence of such term vanishes in systems with large $T$. Another lower bound is the case where $T_{\cab}=T^*_{\hd}$, thus the CAB takes the same training length as optimal half-duplex system. We refer this scheme as ``\textsl{CAB with $T_{\cab}=T^*_{\hd}$}."

\subsubsection{Half-duplex System}
Similar to CAB strategy, the spectral efficiency loss (with respect to genie-aided system) of half-duplex system can also be upper bounded by evaluating $AR_{\hd}\left(\widetilde{T}_{\hd}^*\right)$.  
\begin{theorem}
The upper bound for the spectral efficiency loss of half-duplex systems with respect to genie-aided system can be characterized as
\begin{equation}
\Delta  AR_{\hd}\left(T_{\hd}^*\right)\leqslant R^{\zf}-AR_{\hd}\left(\widetilde{T}_{\hd}^*\right)\leqslant 2\sqrt{\frac{(M-1) R_{\zf}}{fT}}.
\end{equation} 
\end{theorem}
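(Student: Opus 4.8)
The plan is to mimic the proof of Theorem~\ref{thm:CABspe}, which is in fact easier here since $AR_{\hd}$ in~\eqref{equ:ARHD} has no $R_{\ini}$ term to sum over. First I would observe that, because $T_{\hd}^{*}$ maximizes $AR_{\hd}$ over all feasible training lengths, the (rounded) approximate optimizer $\widetilde{T}_{\hd}^{*}\approx\sqrt{(M-1)T/(fR^{\zf})}$ supplied by the preceding theorem is merely one feasible choice, so $AR_{\hd}(T_{\hd}^{*})\geqslant AR_{\hd}(\widetilde{T}_{\hd}^{*})$. Subtracting both sides from the genie-aided rate $R^{\zf}$ gives immediately $\Delta AR_{\hd}(T_{\hd}^{*})=R^{\zf}-AR_{\hd}(T_{\hd}^{*})\leqslant R^{\zf}-AR_{\hd}(\widetilde{T}_{\hd}^{*})$, which is the first inequality; everything after that amounts to bounding the right-hand side.

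Next I would substitute $R_{\data}=R^{\zf}-\Delta R_{\data}$ into~\eqref{equ:ARHD} to split the loss into two nonnegative pieces, $R^{\zf}-AR_{\hd}(\widetilde{T}_{\hd}^{*})=\tfrac{\widetilde{T}_{\hd}^{*}}{T}R^{\zf}+\tfrac{T-\widetilde{T}_{\hd}^{*}}{T}\Delta R_{\data}(\widetilde{T}_{\hd}^{*})$: the first piece is the rate forgone during the training phase, the second is the residual IBI loss during data transmission. For the training-phase piece I would plug in $\widetilde{T}_{\hd}^{*}\approx\sqrt{(M-1)T/(fR^{\zf})}$ to get $\tfrac{\widetilde{T}_{\hd}^{*}}{T}R^{\zf}\leqslant\sqrt{(M-1)R^{\zf}/(fT)}$. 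For the IBI piece I would bound $\tfrac{T-\widetilde{T}_{\hd}^{*}}{T}\leqslant 1$, invoke the rate-loss bound~\eqref{equ:dRDATA} (with $T_{\cab}$ replaced by $T_{\hd}$), apply $\log(1+x)\leqslant x$, and then drop the $1$ in the denominator so that $\Delta R_{\data}(\widetilde{T}_{\hd}^{*})\leqslant(M-1)/(f\widetilde{T}_{\hd}^{*})$; substituting $\widetilde{T}_{\hd}^{*}$ once more makes this $\leqslant\sqrt{(M-1)R^{\zf}/(fT)}$ as well. Adding the two bounds yields the claimed $2\sqrt{(M-1)R^{\zf}/(fT)}$.

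The step I expect to be the main obstacle is the bookkeeping around $\widetilde{T}_{\hd}^{*}$: the training-phase piece wants an upper bound on $\widetilde{T}_{\hd}^{*}$ while the IBI piece wants a lower bound, and the exact closed form $\widetilde{T}_{\hd}^{*}=(\sqrt{4cT+1}-1)/(2c)$ differs from the handy value $\sqrt{(M-1)T/(fR^{\zf})}$ by an $O(1/\sqrt{T})$ term of fixed sign, so a careless substitution could push one of the two bounds the wrong way. I would resolve this by working throughout with the leading-order value $\widetilde{T}_{\hd}^{*}=\sqrt{(M-1)T/(fR^{\zf})}$, which after rounding is itself a feasible training length and hence still lower-bounds $AR_{\hd}(T_{\hd}^{*})$, after first checking $\widetilde{T}_{\hd}^{*}<T$ so that this point is genuinely achievable (alternatively one could keep the exact form and carry an $o(1/\sqrt{T})$ remainder, exactly as in Theorem~\ref{thm:CABspe}). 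The remaining manipulations — the two uses of $\log(1+x)\leqslant x$ and dropping the unit term in the denominator, both already present in Appendix~\ref{app:speCAB} — are routine, and because $AR_{\hd}$ lacks the $R_{\ini}$ sum of $AR_{\cab}$ no leftover $o(1/\sqrt{T})$ term survives in the final bound.
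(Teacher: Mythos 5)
Your proposal is correct and follows essentially the same route as the paper's Appendix D: lower-bound $AR_{\hd}(T_{\hd}^{*})$ by evaluating at the approximate optimizer, split the loss into the training-time term $\tfrac{\widetilde{T}_{\hd}^{*}}{T}R^{\zf}$ plus the residual IBI term (your $\tfrac{T-\widetilde{T}_{\hd}^{*}}{T}\leqslant 1$ step is algebraically the paper's ``drop the negative term'' step a)), and bound each piece by $\sqrt{(M-1)R^{\zf}/(fT)}$ via $\log(1+x)\leqslant x$ and substitution of $\widetilde{T}_{\hd}^{*}\approx\sqrt{(M-1)T/(fR^{\zf})}$. Your explicit handling of the opposite-direction dependence on $\widetilde{T}_{\hd}^{*}$ in the two pieces is a careful touch the paper glosses over with ``tight for big $T$,'' but it does not change the argument.
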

\begin{proof}
Similar to the spectral efficiency characterization of CAB systems, we can directly capture the spectral efficiency of half-duplex systems by evaluating systems with approximated training duration. Detailed derivation can be found in Appendix~\ref{app:seHD}.
\end{proof}

While larger INI widens the gap in~\eqref{equ:dARCAB}, the term $\widetilde{\Delta R}_{\ini}$ is strictly smaller than $R_{\zf}$ for any $\alpha>0$. That is, CAB strategy with open-loop training always implies a \emph{positive} spectral efficiency gain over its half-duplex counterpart under any INI regime. Numerical simulation results in Fig.~\ref{fig:ARplot} agrees with the analysis.  However, it should be acknowledged that the potential spectral benefit of CAB decrease as INI level increases. In such situations, more sophisticated techniques, such as decode-and-cancel, are needed to properly handle INI. 

It is worthwhile noting that a spectral efficiency loss scaling at the rate of $\frac{1}{\sqrt{T}}$ is observed for both full-duplex and half-duplex systems with open-loop training. This scaling law also exists in half-duplex MIMO broadcast channels with analog feedback \cite{kobayashi2011training}.  

The spectral efficiency gain of CAB can be immediately captured by comparing a CAB strategy with its half-duplex counterpart of the same training length $T_{\tr}$.
\begin{Proposition}
 The spectral efficiency gain $ARG=AR_{\cab}\left( T_{\tr}\right)-AR_{\hd}\left(T_{\tr}\right) $ is lower bounded as follows 
$$ARG \geqslant \frac{T_{\tr}}{T}\left[R_{\zf}-\widetilde{\Delta R}_{\ini}\right]-o(\frac{1}{\sqrt{T}}).$$
 \end{Proposition}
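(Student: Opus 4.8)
The plan is to use the fact that, for a common training length $T_{\tr}$, the CAB and half-duplex systems are \emph{identical after} training: both devote the last $T-T_{\tr}$ symbols to downlink data at rate $R_{\data}(T_{\tr})$, contributing the same $\frac{T-T_{\tr}}{T}R_{\data}(T_{\tr})$ term to either spectral efficiency. Since the half-duplex system is silent on the downlink during its training slots, the difference collapses to exactly the in-training contribution of CAB,
\begin{equation*}
ARG=\frac{1}{T}\sum_{j=2}^{T_{\tr}/M}\left[R_{\cancel{\ini}}\left((j-1)M\right)+(M-1)R_{\ini}\left((j-1)M\right)\right],
\end{equation*}
so the whole task reduces to lower bounding this sum of per-cycle rates.

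I would then bound a single cycle's contribution from below using the rate characterizations already established. Writing $\epsilon_\beta=\frac{P}{M}\frac{M-1}{1+\beta fP}$ for the IBI term after $\beta$ pilots per user, \eqref{equ:dRDATA} gives $R_{\cancel{\ini}}\left((j-1)M\right)\geq R^{\zf}-\log(1+\epsilon_{j-1})\geq R^{\zf}-\epsilon_{j-1}$. In Theorem~\ref{thm:INI}, factoring $1+\alpha fP$ out of the numerator of the bound splits the INI rate loss as $\Delta R_{\ini}\left((j-1)M\right)=\widetilde{\Delta R}_{\ini}+\log\left(1+\frac{\epsilon_{j-1}}{1+\alpha fP}\right)$, i.e.\ the ``invariant'' loss~\eqref{eq:tilde_delta_R} plus a residual IBI term, so that $R_{\ini}\left((j-1)M\right)\geq R^{\zf}-\widetilde{\Delta R}_{\ini}-\epsilon_{j-1}$ after using $\log(1+x)\leq x$ and $\frac{\epsilon_{j-1}}{1+\alpha fP}\leq\epsilon_{j-1}$. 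Adding the one clean symbol to the $M-1$ interfered ones and using $\widetilde{\Delta R}_{\ini}\geq0$, each data-carrying cycle contributes at least $M\left(R^{\zf}-\widetilde{\Delta R}_{\ini}\right)-M\epsilon_{j-1}$.

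Summing over the $T_{\tr}/M-1$ cycles and dividing by $T$ then yields
\begin{equation*}
ARG\geq\frac{T_{\tr}-M}{T}\left(R^{\zf}-\widetilde{\Delta R}_{\ini}\right)-\frac{M}{T}\sum_{j=2}^{T_{\tr}/M}\epsilon_{j-1}.
\end{equation*}
The first term equals $\frac{T_{\tr}}{T}\left(R^{\zf}-\widetilde{\Delta R}_{\ini}\right)$ up to an $O(1/T)$ correction. For the residual, the harmonic-type bound $\sum_{k=1}^{N}\frac{1}{1+kfP}=O(\log N)$ gives $\frac{M}{T}\sum_{j}\epsilon_{j-1}=O\!\left(\frac{\log T_{\tr}}{T}\right)$, which is $o(1/\sqrt{T})$ whether $T_{\tr}$ is held fixed or allowed to scale (e.g.\ like $\sqrt{T}$, as in Theorem~\ref{thm:LenCAB}). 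Collecting the two vanishing terms gives the claimed bound.

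The step that needs the most care is the additive decomposition of $\Delta R_{\ini}$ into $\widetilde{\Delta R}_{\ini}$ plus a \emph{summable} residual: one has to verify that the residual is exactly $\log\left(1+\epsilon_{j-1}/(1+\alpha fP)\right)$ and that $\epsilon_{j-1}=\Theta(1/j)$, so that the per-cycle errors accumulate to only $O(\log T_{\tr})$ instead of $O(T_{\tr})$. After that it is routine bookkeeping with $\log(1+x)\leq x$ and the nonnegativity of $R^{\zf}$ and $\widetilde{\Delta R}_{\ini}$; in particular no ordering between $R_{\cancel{\ini}}$ and $R_{\ini}$ is required, and the constant actually obtained is slightly better than the stated one, since $\widetilde{\Delta R}_{\ini}$ enters with the factor $(M-1)/M<1$.
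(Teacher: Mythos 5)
Your proof is correct and follows essentially the same route as the paper's: reduce $ARG$ to the in-training CAB contribution (the post-training terms cancel), split the Theorem~\ref{thm:INI} rate-loss bound into $\widetilde{\Delta R}_{\ini}$ plus a $\Theta(1/\beta)$ IBI residual, and show the accumulated residual is $o(1/\sqrt{T})$. The only cosmetic differences are that you bound the residual directly via $\log(1+x)\leqslant x$ and a harmonic sum where the paper invokes Jensen's inequality, and you keep the clean symbol's higher rate, which slightly improves the constant.
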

Proof details are provided in Appendix~\ref{app:seGain}. The above results manifest a positive spectral efficiency gain is always available in full-duplex systems compared to half-duplex, and the gain is significant in low INI regimes. 
  \begin{figure}[htbp]
\centering
\includegraphics[width=8cm, height=3.5cm]{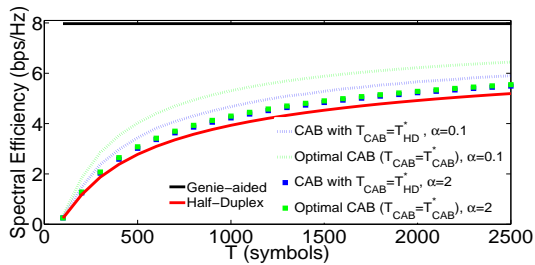}
\caption{Achievable rate performance for genie-aided, CAB and half-duplex system, where $M=8$, $\snr=10\text{dB}$, $f=0.1$.} \label{fig:ARplot}
\end{figure}
The simulation result in Fig.~\ref{fig:ARin} demonstrates these findings. Optimal allocation of the time resource for CAB strategy reveals spectral efficiency improvement over the CAB with $T_{\cab}=T_{\hd}^*$ strategy. We observe that the spectral efficiency of optimal CAB can be nearly doubled for MIMO broadcast channels around $0$ dB. We note that the spectral efficiency gain decreases with $\snr$. This is also marked through the decrease of optimal training symbols with $\snr$, analyzed in Section~\ref{sec:OptLen}.
  \begin{figure}[htbp]
\centering
\includegraphics[width=8cm,height=3.5cm]{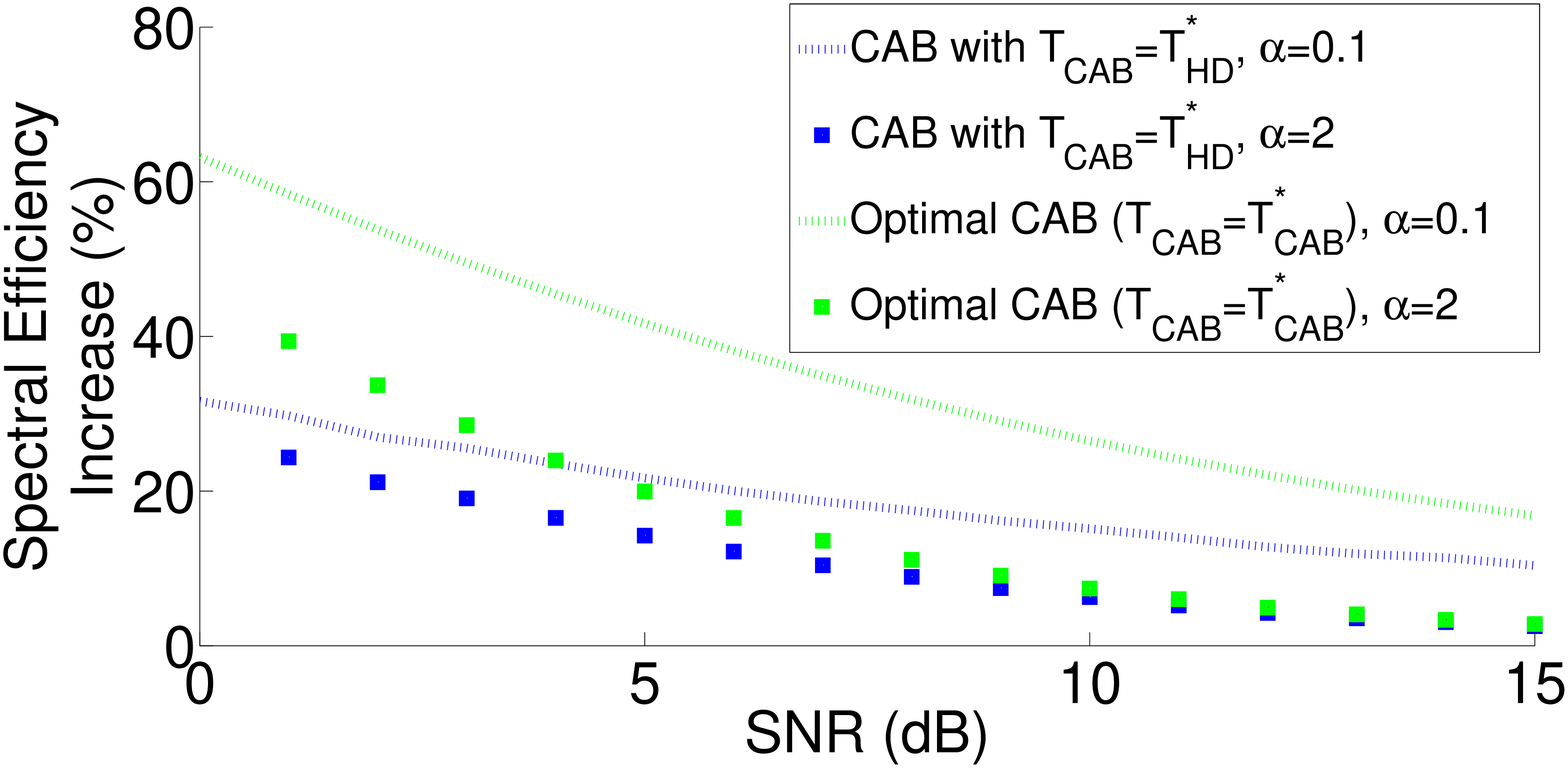}
\caption{The spectral efficiency gain compared to half-duplex counter part (in percentage) is shown for optimal CAB ans CAB with $T_{\cab}=T_{\hd}^*$ system, where $M=8$, $T=2000$, $f=0.1$.} \label{fig:ARin}
\end{figure}

\section{Conclusion} \label{sec:conclude}
We extend CAB of full-duplex communications to a multiuser MIMO downlink system with open-loop training which harnesses channel reciprocity for feedback. The effect of INI is analyzed and the optimal tradeoff between INI and IBI is investigated. We provide a novel technique to tightly approximate the optimal allocation of training symbols to both full-duplex and half-duplex systems. We further analytically characterize the spectral efficiency of CAB and its half-duplex counterpart. We demonstrate positive spectral efficiency gain of the proposed CAB strategy over half-duplex scenarios, and show that these gains are significant in low-to-moderate $\snr$ regimes whereby the base station suffers inaccurate estimation of channel CSI. Analytical results are supported with numerical simulations to demonstrate the potential gains.

\begin{appendices}
\section{INI characterization}\label{app:INI}
We will upper bound the rate loss $\Delta R_{\ini} \left(\beta M\right)$. Following the notation in \cite{caire2010multiuser}, we will use $\mathbf{v}_k$ for the precoding vector of user $k$. Since we consider perfect channel knowledge at the receiver side, then the rate loss can be upper-bounded as follows.
\begin{align*} 
\Delta R_{\ini}(\beta M) \stackrel{a)}{\leqslant} & \mathbb{E}\left[\log\left(1+\frac{|\mathbf{h}_{k}\mathbf{v}_k|^2P}{M}\right)\right]
\\&-\mathbb{E}\left[\log\left(1+\frac{|\mathbf{h}_{k}\mathbf{v}_k|^2 P/M}{1+\mathcal{P}_{\mathrm{IBI}}(\beta)+\alpha f P}\right)\right]
\\ \stackrel{b)}{\leqslant} &\mathbb{E}\left[\log\left(1+\frac{|\mathbf{h}_{k}\mathbf{v}_k|^2P}{M}\right)\right]
\\&-\mathbb{E}\left[\log\left( 1+\frac{|\mathbf{h}_{k}\mathbf{v}_k|^2P}{M} +\alpha f P\right)\right]
\\&+\mathbb{E}\left[\log\left( 1+\mathcal{P}_{\mathrm{IBI}} (\beta)+\alpha f P\right)\right]
\\ \stackrel{c)}{\leqslant}&  \log\left( \frac{1+\frac{P}{M}\frac{M-1}{1+\beta f P} +\alpha f P}{1 +\frac{\alpha f P}{1+\frac{P}{M}}}\right).  
\end{align*}
Here, a) is obtained directly from \cite{caire2010multiuser}, and  b) follows by ignoring the positive  term $\mathcal{P}_{\mathrm{IBI}}(\beta)$\footnote{$\mathcal{P}_{\mathrm{IBI}}$ is the IBI interference due to imperfect CSI of $\beta$ symbols and is explicitly detailed in \cite{caire2010multiuser}.}  in the numerator. Finally, c) holds by applying Jensen's inequality while noting the concavity of the $\log$ function.

\section{Optimal Training Length for Half-duplex system}\label{app:LenHD}
Since $AR_{\hd}$ is also concave (with respect to $T_{\hd}$), the unique optimizer $T_{\hd}^*$ can be obtained by solving
$\frac{\partial AR_{\hd}\left(T_{\hd}^*\right)}{\partial T_{\hd}}=0$.

Despite taking derivative is possible for $AR_{\hd}$, closed form expression for $T^*_{\hd}$ is unavailable. Hence, we will use a similar approach to that in Section III-C.
We first expand the expression of $AR_{\hd}$ and use $\Delta R_{\data}$ to evaluate $R_{\data}$.
\begin{align}
& \frac{T-\left(\widetilde{T}_{\hd}^*+M\right)}{T}\left(\Delta R_{\data}\left(\widetilde{T}_{\hd}^*-\Delta R_{\data}\left(\widetilde{T}_{\hd}^*+M\right)\right)\right)\notag\\=&\frac{M}{T}R_{\data}\left(\widetilde{T}_{\hd}^*+M\right), \label{equ:optHDcom}
\end{align} 
On the left hand side is the same as Eq. (8), which is the marginal utility for adding one more training cycle. On the right hand is the potential spectral efficiency contribution (marginal utility) of this $M$ symbols. A higher marginally loss is observed. This agrees with the fact that time resources used for training (in half-duplex system) has no downlink throughput.

Noticing that the left hand side is identical to the LHS term in~\eqref{equ:optHDcom}, we can directly borrow the evaluation of $\text{LHS}$. $R_{\data}\left(T_{\hd}+M\right)$ can be naively approximated by $R^{\zf}$. These two results complete the proof.

\section{Upper bound of CAB strategy spectral efficiency loss}\label{app:speCAB}

\begin{align*}
\Delta  AR_{\cab}\left(T_{\cab}^*\right)\leqslant R^{\zf}-AR_{\cab}\left(\widetilde{T}_{\cab}^{*}\right).
\end{align*}
We will expand the $AR_{\cab}$ with the help of rate characterization in the main paper. Since the downlink rate during each cycle is non-negative, we can also drop the spectral efficiency gain in the second cycle. With some mathematical manipulations gives us the following form. 
\begin{align*} 
&\Delta  AR_{\cab}\left(T_{\cab}^*\right)\\
\leqslant& \frac{2M}{T}R^{\zf}+\sum_{\beta=2}^{\frac{\widetilde{T}_{\cab}^{*}}{M}-1}\frac{M}{T}\log\left( \frac{1+\frac{P}{ M}\frac{M-1}{1+\beta f P } +\alpha f P}{1 +\frac{\alpha f P}{1+\frac{P}{M}}}\right)  + \left(1-\frac{\widetilde{T}_{\cab}^{*}}{T}\right)\Delta R_{\data}\left(\widetilde{T}_{\cab}^{*}\right).
\end{align*}

We will then analyses each term independently. Since $\lim_{T\to \infty}\frac{2M}{T}R^{\zf}\sqrt{T}=0$, it is immediate to denote it as $o(\frac{1}{\sqrt{T}})$. The third term can then be upper bounded as follows.

\begin{align*}
\left(1-\frac{\widetilde{T}_{\cab}^{*}}{T}\right)\Delta R_{\data}\left(\widetilde{T}_{\cab}^{*}\right)
\leqslant& \Delta R_{\data}\left(\widetilde{T}_{\cab}^{*}\right)\\
=&\log\left(  1+\frac{\frac{M-1}{M}P}{1+ f  \widetilde{T}_{\cab}^{*}P/M}\right) \\
\leqslant& \frac{ M-1}{ f  \widetilde{T}_{\cab}^{*} }.
\end{align*}
The final step is done by using the Maclaurin expansion.

The second term can be then upper bounded by removing $1$ in the denominator and change $M-1$ into $M$ in the numerator, which leads to

\begin{align*}
\sum_{\beta=2}^{\frac{\widetilde{T}_{\cab}^{*}}{M}-1}\frac{M}{T}\log\left( \frac{1+\frac{P}{M}\frac{M-1}{1+\beta f P} +\alpha f P}{1 +\frac{\alpha f P}{1+\frac{P}{M}}}\right) \leqslant\sum_{\beta=2}^{\frac{\widetilde{T}_{\cab}^{*}}{M}-1}\frac{M}{T}\log\left( \frac{1+\frac{M-1}{M}\frac{1}{ \beta f}+\alpha f P}{1 +\frac{\alpha f P}{1+\frac{P}{M}}}\right)
\end{align*}

Since $\log\left(1+x+y\right)\leqslant\log(1+x)+\log(1+y)$ for $x,y>0$, this implies we can break the rate loss term into 2 parts. 
\begin{equation*}
\sum_{\beta=2}^{\frac{\widetilde{T}_{\cab}^{*}}{M}-1}\frac{M}{T}\log\left( \frac{1+\frac{M-1}{M}\frac{1}{ \beta f}+\alpha f P}{1 +\frac{\alpha f P}{1+\frac{P}{M}}}\right) \leqslant \sum_{\beta=2}^{\frac{\widetilde{T}_{\cab}^{*}}{M}-1}\frac{M}{T}\left[\log\left( \frac{1+\alpha f P/M}{1 +\frac{\alpha f P}{1+\frac{P}{M}}}\right)+\log\left( 1+\frac{M-1}{M}\frac{1}{ \beta f }\right)\right].
\end{equation*}
Noticing that $\log\left( \frac{1+\alpha f P/M}{1 +\frac{\alpha f P}{1+\frac{P}{M}}}\right)=\widetilde{\Delta R}_{\ini}$ and enlarging $M-1$ in the numerator to $M$ will simplify the bound to
$$\frac{\widetilde{T}_{\cab}-2M}{T} \widetilde{\Delta R}_{\ini}+\sum_{\beta=2}^{\frac{\widetilde{T}_{\cab}^{*}}{M}-1}\log\left( 1+\frac{M-1}{M}\frac{1}{ \beta f }\right) $$.

Due to the convexity of $\log(1+x)$, we can apply Jensen's inequality to the latter term. The two above steps gives us
\begin{align*}
\sum_{\beta=2}^{\frac{\widetilde{T}_{\cab}^{*}}{M}-1}\frac{M}{T}\log\left( \frac{1+\frac{P}{M}\frac{M-1}{1+\beta f P} +\alpha f P}{1 +\frac{\alpha f P}{1+\frac{P}{M}}}\right)\leqslant\frac{\widetilde{T}_{\cab}-2M}{T}\left[\widetilde{\Delta R}_{\ini}+ \log\left( 1+\frac{\sum_{\beta=2}^{\frac{\widetilde{T}_{\cab}^{*}}{M}-1}\frac{M-1}{M}\frac{1}{ \beta} }{f\left(\widetilde{T}_{\cab}^{*}-2\right)} \right)\right].
\end{align*} 

Enlarging $\frac{\widetilde{T}_{\cab}-2M}{T}$ to $\frac{\widetilde{T}_{\cab}}{T}$ and expanding the later term gives us
\begin{align*}
\sum_{\beta=2}^{\frac{\widetilde{T}_{\cab}^{*}}{M}-1}\frac{M}{T}\log\left( \frac{1+\frac{P}{M}\frac{M-1}{1+\beta f P} +\alpha f P }{1 +\frac{\alpha f P }{1+\frac{P}{M}}}\right)\leqslant \frac{\widetilde{T}_{\cab}}{T}\widetilde{\Delta R}_{\ini}+\frac{\widetilde{T}_{\cab}}{T}  \frac{M-1}{M}\frac{1}{f}\frac{ \log\left(\widetilde{T}_{\cab}^{*}-1\right)}{\left(\widetilde{T}_{\cab}^{*}-2\right)}\\
\leqslant  \sqrt{\frac{M\widetilde{\Delta R}_{\ini}}{fT}}+o(\frac{1}{\sqrt{T}}).
\end{align*}

Substituting the upper bound we got for the three individual term, a total upper bound for the spectral efficiency loss can be expressed as

\begin{align*}
\Delta AR_{\cab}\left(T_{\cab}^*\right)\leqslant& o(\frac{1}{\sqrt{T}})+  \sqrt{\frac{M\widetilde{\Delta R}_{\ini}}{fT}}+\sqrt{\frac{M\widetilde{\Delta R}_{\ini}}{fT}}+o(\frac{1}{\sqrt{T}})\\
=& 2\sqrt{\frac{M\widetilde{\Delta R}_{\ini}}{fT}}+o(\frac{1}{\sqrt{T}}).
\end{align*} 

\section{Upper bound of Half-duplex system spectral efficiency loss}\label{app:seHD}
 The gap with respect to $R^{\zf}$ can be immediately obtained by following the recipe in~\cite{kobayashi2011training} Section II-A as: 
 
 \begin{align*}
\Delta AR_{\hd}\left(T_{\hd}^*\right)\leqslant& R^{\zf}-AR_{\hd}\left(\widetilde{T}_{\hd}^{*}\right)\\
 =& R^{\zf}-\left(\frac{T-T_{\hd}^*}{T}\right)\left[R^{\zf}-\log\left(  1+\frac{(M-1)\frac{P}{M}}{1+ f  \widetilde{T}_{\hd}^{*}P/M}\right)\right]\\
\stackrel{a)}{\leqslant}& \frac{\widetilde{T}_{\hd}^{*}}{T}R^{\zf}+\log\left(  1+\frac{(M-1)\frac{P}{M}}{1+ f  \widetilde{T}_{\hd}^{*}P/M}\right) \\
\leqslant& \sqrt{\frac{(M-1)R^{\zf}}{fT}}+\sqrt{\frac{R^{\zf}(M-1)}{fT}}\\
=& 2\sqrt{\frac{(M-1)R^{\zf}}{fT}}.
\end{align*}
Step a) is obtained by drooping the negative term $-\frac{\widetilde{T}_{\hd}^{*}}{T}\log\left(  1+\frac{(M-1)\frac{P}{M}}{1+ f  \widetilde{T}_{\hd}^{*}P/M}\right)$. The next step is the result of
Maclaurin expansion to the logarithm term, which is tight for big $T$.

\section{Lower Bound on spectral efficiency of sub-optimal CAB strategy}\label{app:seGain}
  
 \begin{align*}
 & AR_{\cab}\left( T_{\hd}^*\right)-AR_{\hd}\left(T_{\hd}^*\right)\\
 \geqslant& \sum_{\beta=2}^{\frac{T_{\hd}^{*}}{M}-1}\frac{M}{T}\left[ R^{\zf}-\Delta R_{\ini}\left(\beta\right)\right]  \\
 \geqslant&\frac{T_{\hd}^{*}-2M}{T}\left[R_{\zf}-\widetilde{\Delta R}_{\ini}\right]- \frac{ T_{\hd}^*}{T}  \frac{M-1}{M}\frac{1}{f}\frac{ \log\left( T_{\hd}^*-1\right)}{\left( T_{\hd}^*-2\right)}\\
 \geqslant& \frac{T_{\hd}^{*}}{T}\left[R_{\zf}-\widetilde{\Delta R}_{\ini}\right]-o(\frac{1}{\sqrt{T}}).
 \end{align*} 
 The step is directly obtained by dropping the data transmission during the second cycle. The next step follow the same characterization method used in the proof of Theorem~\ref{thm:CABspe}. The finally step is the result of combining two $o(\frac{1}{\sqrt{T}})$ terms. 
 
\end{appendices}

 \bibliography{IEEEabrv,SPAWC15Ref}

\end{document}